\documentclass[a4paper,fleqn]{cas-sc}
\usepackage[numbers]{natbib}
\usepackage{amsmath,amssymb}
\usepackage{amsthm,amsfonts}
\usepackage{amsxtra,fancybox,ascmac}
\usepackage{charter}
\usepackage{dsfont}
\usepackage{algorithm}
 \usepackage{algorithmic}
 \usepackage{arydshln}
\usepackage{subcaption}

\newcommand{\argmin}[1][]{\operatorname*{\mathrm{argmin}_\mathnormal{#1}}}

\newtheorem{asmp}{Assumption}
\newtheorem{pb}{Problem}
\newtheorem{lemma}{Lemma}

\newtheorem{theorem}{Theorem}

\newtheorem{remark}{Remark}
\def\tsc#1{\csdef{#1}{\textsc{\lowercase{#1}}\xspace}}
\tsc{WGM}
\tsc{QE}
\tsc{EP}
\tsc{PMS}
\tsc{BEC}
\tsc{DE}
\usepackage{float}
\begin{document}
\let\WriteBookmarks\relax
\def\floatpagepagefraction{1}
\def\textpagefraction{.001}

\shorttitle{Robust Distributed Sub-Optimal Coordination of Linear Agents with Uncertain Input Nonlinearities}

\shortauthors{T. Namba}

\title [mode = title]{Robust Distributed Sub-Optimal Coordination of Linear Agents with Uncertain Input Nonlinearities}                      
\tnotemark[1]

\tnotetext[1]{The author is supported by JSPS KAKENHI Grant-in-Aid for Research Activity Start-up 24K22960 and KAKENHI Grant-in-Aid for Early-Career Scientists 26K17405.}

\author[1]{Takumi Namba}[orcid=0009-0008-2906-2711]

\ead{t-namba@fc.ritsumei.ac.jp}


\credit{Conceptualization of this study, Methodology, Formal analysis, Writing - Original Draft, Funding acquisition}

\affiliation[1]{organization={Ritsumeikan University},
    addressline={1-1-1 Noji-Higashi}, 
    city={Kusatsu, Shiga},
    postcode={525-0058}, 
    country={Japan}}

\begin{abstract}
In this paper, we study robust distributed sub-optimal coordination of linear agents subject to input nonlinearities. Inspired by the robust agreement literature, we formulate a bounded distributed sub-optimal coordination problem, in which each agent converges to a neighborhood of the optimizer of a global optimization problem defined over a communication network. We propose a novel control protocol, and analyze convergence by employing a robust control approach, in which both the input nonlinearities and the gradients of the objective functions are treated in a unified manner via sector conditions. In particular, we derive sufficient conditions for the solvability of the considered problem and characterize them in terms of matrix inequalities. The effectiveness of the proposed method is demonstrated through a numerical simulation.
\end{abstract}



\begin{keywords}
Distributed optimal coordination \sep
Robust agreement \sep 
Distributed optimization \sep
Input nonlinearities
\end{keywords}

\maketitle
\section{Introduction}
\emph{Continuous-time} optimization algorithms have been extensively studied. Especially, in the last two decades, continuous-time \emph{distributed optimization over networks} has been developed using various control theoretical approaches \citep{Tao-Yang,Wang-CDC,Yu-IET,Deng-IET}. In traditional control frameworks, one solves the distributed optimization problem offline and obtains the optimizer, and then designs a local tracking controller to it, treating the two tasks separately. For systems that operate in real time, it is required to establish a distributed algorithm that asymptotically steers all the agents to the optimizer of the optimization problem over the network \emph{online}.

Recently, control-oriented distributed optimization over networked multi-agent systems has gained attention under names such as \emph{distributed optimal coordination}~\citep{Xian-TAC,Xian-SCL,An-TAC,Li-TAC,Yu-Automatica,Deng-TAC,Li-Auto,Huang-IET,Rahimi-IET,Wang-IET}, \emph{optimal consensus} \citep{Zhang-Cyb}, or \emph{distributed feedback optimization}~\citep{Terpin,Romano2,Romano-LMI}. Specifically, \emph{distributed optimal coordination} can be viewed as an \emph{agreement} (consensus/synchronization) problem in which every agent achieves the state/output agreement to the optimizer of a coupled optimization problem. 

One of the crucial requirements in the aforementioned framework is \emph{how to handle the agent dynamics and various physical characteristics embedded in them}. In practical scenarios, agents are operated online with an optimizing protocol, making hardware characteristics critical. Hardware characteristics range from input nonlinearities, input saturations, or parametric uncertainties. They are ubiquitous in various engineering applications like robotic networks, power systems, and so on. Nevertheless, this perspective has not been discussed in the distributed optimal coordination literature. 

\paragraph*{\textbf{Contribution:}}
In this paper, we study \emph{robust} distributed sub-optimal coordination of linear agents with input nonlinearities. To the best of the author's knowledge, this paper presents the first result on the robust (bounded) distributed sub-optimal coordination against the input nonlinearities. In the presence of the input nonlinearities, the exact convergence to the optimizer may be impossible. Motivated by this, we introduce a notion of \emph{robust distributed sub-optimal coordination}, in which each agent approaches and still remains in a certain region around the optimizer. We propose a novel control protocol that solves the aforementioned problem, and analyze the convergence via the input-to-state stability argument. Moreover, we derive a gain condition that provides a sufficient condition for the convergence, which is expressed in terms of matrix inequalities.

The contributions of this work are summarized as follows: 
\begin{itemize}
    \item{We formulate a novel robust practical (bounded) distributed sub-optimal coordination problem taking into account the input nonlinearities embedded in the agent dynamics.}
    \item{We propose a novel distributed protocol for distributed sub-optimal coordination of linear agents with input nonlinearities.}
    \item{We derive a sufficient condition for the solvability and the agreement tolerance, which is expressed in terms of linear matrix inequalities. Based on an input-to-state stability (ISS) analysis, we prove that the proposed approach guarantees that the robust distributed sub-optimal coordination is achieved against the input nonlinearities.}
   \item{We demonstrate the effectiveness of the proposed method through a numerical example.}
\end{itemize}

The above contributions extend the applicability of robust distributed optimal coordination approaches to more general and practical agents.
\paragraph*{\textbf{Comparison with Existing Results:}}
Robust agreement problems have been studied in \citet{Trentelman,Nguyen,Takaba-JCMSI,Takaba-ICCAS}.
For a robust agreement problem of physically coupled agents, \citet{Namba} proposed a scaled small-gain approach for less conservative robustly agreeing controller synthesis. Nevertheless, optimization-aware robust agreement remains unexplored.

\citet{Yu-Chen} and the very recent publication by  \citet{11386952} dealt with the robust distributed sub-optimal coordination of the linear agents and nonlinear agents, respectively, and studied  robustness of the algorithms against the external disturbances via the ISS-based approaches. Nonetheless, our analysis is completely different from theirs in terms of the disturbances considered in \citep{Yu-Chen,11386952}. \citet{Yu-Chen,11386952} considered external disturbances that do not depend on the agent state, and assumed them to be norm bounded. Therefore, their framework cannot capture the input-dependent nonlinearities as considered in this paper.  

As a preliminary result related to this paper, \citet{CAO} studied a local agreement-based distributed optimal coordination of the first-order linear agents with input saturation. However, the result was limited to a specific first-order setting, and the major contributions of the present paper were not addressed in \citep{CAO}.

\paragraph{\textbf{Notations:}} For a vector $x$, the $j$-th component of $x$ is denoted by $[x]_j$. All-one vector is denoted by $\mathds{1}:=[1,1,\ldots,1]^{\sf T}$. For a symmetric matrix $M$, $\underline{\lambda}(M)$ and $\overline{\lambda}(M)$ denote the smallest and largest eigenvalues.

\section{Problem Formulation}
\label{sec:2}
\subsection{Multi-Agent Systems}
In this paper, we consider a multi-agent system consisting of $N$ linear agents, and an optimization problem defined over them.
Each agent has time-varying input nonlinearity and its dynamics is expressed by
\begin{align}
\label{eq:dynamics}
    \dot{x}_i=Ax_i+B_0\phi_i(u_i,t), \ i=1,\ldots,N,
\end{align}
where $x_i(t)\in\mathds{R}^{n}$ is the state, $u_i(t)\in\mathds{R}^m$ is the input, and $A\in\mathds{R}^{n\times n}$, and $B_0\in\mathds{R}^{n\times m}$ are the constant matrices, and $\phi_i(u_i,t):\mathds{R}^{m}\times \mathds{R}_{\geq0}\rightarrow \mathds{R}^m$ is the time-varying input nonlinearity. The input nonlinearity $\phi_i(u_i,t)$ is ubiquitous in various practical applications, to model uncertainties and heterogeneity in robust control theory. Throughout this paper, we sometimes use the notation $\phi_i(u_i)$, dropping $t$ for simplicity. 
\begin{asmp}
\label{asmp:stabilizability}
The pair $(A,B_0)$ is stabilizable, and the matrix $B_0$ is full row rank.
\end{asmp}
Note that the distributed optimal coordination is a kind of reference tracking problem to the optimizer with respect to the full state. The above rank condition of $B_0$ is required to achieve such full-state tracking. 
\begin{asmp}
\label{asmp:input-nonlinearity}
The input nonlinearity $\phi_i(u_i,t)$ globally satisfies the component-wise incremental sector condition, i.e., 
\begin{align}
\alpha\leq\frac{[\phi_i(u_i,t)-\phi_i(u_i',t)]_j}{[u_i-u_i']_j}\leq\beta  \ \ \ \ \forall j =1,\ldots,m,\ \forall t\geq0, \ \forall u_i,u_i'\in\mathds{R}^m, \ \forall i=1,\ldots,N.
\end{align}
Moreover, $[\phi_i(0,t)]_j=0 \ \  \forall i=1,\ldots,N,~\forall j=1,\ldots,m,~\forall t\geq0$.
\end{asmp}
The incremental sector-bounded property is standard in the literature, such as \cite{Zhang,Han-AJC}. Also note that Assumption~\ref{asmp:input-nonlinearity} implies that $\phi_i(u,t)$ is sector-bounded with $[\alpha,\beta]$ in the usual sense. To facilitate the subsequent analysis, we decompose the input nonlinearity into a linear component and a residual nonlinear term. To this aim, we define the function $\phi_i'(u_i,t):\mathds{R}^m\times \mathds{R}_{\geq 0}\rightarrow \mathds{R}^m$ as 
\begin{align}
    \label{eq:def-phidash}
    \phi_i'(u_i,t):=u_i-\frac{1}{\beta}\phi_i(u_i,t)\color{black}.
\end{align}
Then, the dynamics \eqref{eq:dynamics} is rewritten as 
\begin{align}
    \dot{x}_i&=Ax_i+Bu_i-\color{black}B\phi_i'(u_i,t), \ i=1,\ldots,N
\end{align}
with $B=\beta B_0$. One can verify that the function $\phi_i'(u_i,t)$ satisfies the sector bound condition with the sector $[0,\gamma]$ with $\gamma=(\beta-\alpha)/\beta$. (Refer to \cite{Takaba-ICCAS}). This immediately implies the sector conditions
\begin{align}
    \phi'(u)^{\sf T}(\phi'(u)-\gamma u)\leq 0 \ \ \ \forall t\geq 0, \forall u\in\mathds{R}^{mN}
\end{align}
with 
\[
u:=
\begin{bmatrix}
    u_1\\
    u_2\\
    \vdots\\
    u_N
\end{bmatrix}, \ 
\phi'(u,t):=
\begin{bmatrix}
    \phi_1'(u_1,t)\\
    \phi_2'(u_2,t)\\
    \vdots\\
    \phi_N'(u_N,t)
\end{bmatrix}.
\]
Moreover, $\phi'(u,t)$ satisfies the \emph{incremental} sector condition as follows:
\begin{align}
\label{eq:incr-sect-bound-input}
\{\phi'(u',t)-\phi'(u,t)\}^{\sf T}\{(\phi'(u',t)-\phi'(u,t))-\gamma (u'-u)\}\leq 0,
\end{align}
for any inputs $u,u'\in\mathds{R}^{mN}$, and $t\geq 0$. 

\subsection{Local Objective Function}
As described in Introduction, we consider the situation where the agents cooperatively find the optimizer of a certain convex optimization problem, and each agent has its own local objective function $f_i(x_i):\mathds{R}^n\rightarrow\mathds{R}$.
\begin{asmp}
\label{asmp:obj-privacy}
The local objective function $f_i(x_i)$ is only available to the agent $i$. Further, the agent $i$ cannot access the information of the other agents' objective functions $f_j(x_j),~j\neq i$.
\end{asmp}
\begin{asmp}
\label{asmp:objf}
The objective function $f_i(x_i)$ is continuously differentiable, and strongly convex in $x_i$. Moreover, the gradient $\nabla f_i(x_i)$ is globally Lipschitz. 
\end{asmp}
Under Assumption~\ref{asmp:objf}, there exists a constant $\mu$ satisfying \[(\nabla f_i(x_i')-\nabla f_i(x_i))^{\sf T}(x_i'-x_i)\geq \mu\|x_i'-x_i\|^2\] for all $x_i',x_i\in\mathds{R}^{n}$, and there exists a constant $\ell$ satisfying 
\begin{align}
\label{eq:ell-Lip}
    (\nabla f_i(x_i')-\nabla f_i(x_i))^{\sf T}(x_i'-x_i)\leq \ell(x_i'-x_i)^{\sf T}(x_i'-x_i)
    \end{align}
for all $x_i',x_i\in\mathds{R}^{n}$ (Refer to e.g. \citet{Bertsekas}).

We reformulate the gradient term of the objective function so that it can be treated within the same robust-theoretic framework as the input nonlinearity. We thus define
\begin{align*}
\psi_i(x_i):=x_i-\frac{1}{\ell}\nabla_{i}f_i(x_i),~~
\psi(x):=x-\frac{1}{\ell}\nabla f(x).
\end{align*}
Notice that ${\psi}(x)$ satisfies the following incremental sector conditions with the sector $[0,\mu']$ as   
\begin{equation}
\label{eq:Psi-sect}
({\psi}(x')-{\psi}(x))^{\sf T}[({\psi}(x')-{\psi} (x))-\mu' (x'-x)]\leq 0
\end{equation}
with $\mu':=(\ell-\mu)/\ell$, for all $x',x\in\mathds{R}^{nN}$.

\subsection{Robust Distributed Sub-optimal Coordination}
Inspired by robust agreement literature, we introduce a robust distributed sub-optimal coordination problem as a bounded (practical) agreement problem to the optimizer of the networked optimization problem. Under the input nonlinearity constraints, it may be difficult to achieve \emph{exact} convergence to the optimizer.
\begin{pb}
\label{prob:1}
For any initial state $x_1(0),\ldots,x_N(0)$, design a distributed dynamical protocol such that
\begin{align*}
     &\limsup_{t\rightarrow \infty}\|x(t)-\mathds{1}_N\otimes x^{\star}\|\leq\epsilon \ \ \ \ \ ~\forall x_i(0)\in\mathds{R}^n,
    \\
    &\mathrm{where~~}x^{\star} :=\displaystyle \argmin_z f(z)=\sum_{i=1}^{N}f_i(z)~\cdots(\mathbf{OP}),
\end{align*}
holds for some constant $\epsilon$, which means that the agents' states approach and remain a bounded region around the optimizer of $(\mathbf{OP})$ as time goes to infinity.  
\end{pb}
It should be noted that, under Assumption \ref{asmp:objf}, the optimization problem $(\mathbf{OP})$ has a unique optimizer $x^\star$. Also, the existence of the constant $\epsilon$ can be viewed as a sub-optimality certificate.
\section{Local Protocol with Communication}
\subsection{Information Exchange Model}
In this subsection, we briefly review algebraic graph theory for the network model of the multi-agent system, due to \citet{Bullo2018,Godsil}.

Information exchange among agents is expressed as a weighted undirected graph $\mathcal{G}:=(\mathcal{V},\mathcal{E},\mathcal{W})$. The set $\mathcal{V}:=\{1,2,\ldots,N\}$ represents the set of vertices (agents), $\mathcal{E}\subseteq\mathcal{V}\times \mathcal{V}$ is an edge set (communication links), and $\mathcal{W}$ is a set of edge weights. If an edge $(j,i)\in\mathcal{E}$ exists, the agent $i$ can receive the information from the agent $j$. The set of neighboring agents of $i$ is defined by $\mathcal{N}_i:=\{j\in\mathcal{V}~|~(j,i)\in\mathcal{E},~j\neq i\}$. We introduce an adjacency matrix $\mathsf{A}:=({a}_{ij})\in\mathds{R}^{N\times N}$ and a degree matrix $\mathsf{D}:=({d}_{ij})\in\mathds{R}^{N\times N}$ by ${a}_{ij}=
        \omega_{ji}$ $ \mbox{if }(j,i)\in\mathcal{E}$, and otherwise ${a}_{ij}= 
        0$, and $ 
    {d}_{ij}=\sum_{j\in\mathcal{N}_i}{a}_{ij} $ if $ i=j$, otherwise ${d}_{ij}=0$, where $\omega_{ji}>0$ denotes the weight of an edge $(j,i)$. We consider the undirected graph case, so ${a}_{ij}={a}_{ji}$ for $(j,i)\in\mathcal{E}$ holds true. We further define the weighted graph Laplacian $L\in\mathds{R}^{N\times N}$ by $L:=\mathsf{D}-\mathsf{A}$.
\begin{asmp}
The communication graph $\mathcal{G}$ is connected, and its topology is time-invariant. 
\end{asmp}
Under the above assumption, $L$ is a positive semi-definite symmetric matrix with at least one zero eigenvalue, and $L{\mathds{1}}_N=0$ holds true with ${\mathds{1}}_{N}:=[1,1,\ldots,1]^{\sf T}\in\mathds{R}^N$. Hereafter, we denote the eigenvalues of $\mathcal{G}$ by $\lambda_{i}$, and arrange them in the ascending order as $0=\lambda_{1}\leq \lambda_{2}\leq \lambda_{3}\leq \cdots\leq \lambda_{N}$. Further, we introduce an orthogonal matrix $U\in\mathds{R}^{N\times N}$ in the form $
U:=\begin{bmatrix}\frac{\mathds{1}_N}{\sqrt{N}} & X_{1} \end{bmatrix}
,~X_{1}\in\mathds{R}^{N\times (N-1)}$ that diagonalizes $L$ as \[ 
U^{\sf T}LU=\Lambda:=
\mathrm{diag}\{\lambda_{1},\lambda_{2},\cdots,\lambda_{N}\}.
\]
Note that $\|U\|=1$ and $\|X_{1}\|=1$ hold from the orthogonality of $U$. 
\subsection{Distributed Dynamic Protocol}
In this paper, we propose a novel dynamic protocol that solves the distributed sub-optimal coordination. It should be emphasized that, different from the standard agreement problem, the proposed controller has to seek the optimizer through local communication, while the local objective function itself is not revealed to the other agents. Thus, the problem addressed here is much more complex and more difficult than the standard robust agreement problems.

To achieve the aforementioned objectives, we design the local protocol in the form:
\begin{subequations}
\label{eq:dyncamic-protocol}
\begin{align}
u_i&=K_1x_i+K_2v_i+K_3\zeta_i+K_4\eta_i,
\label{eq:dyncamic-protocol-1}
\\
\dot{v}_i&=\sum_{j\in\mathcal{N}_i}a_{ij}(x_j-x_i),
\label{eq:dyncamic-protocol-2}\\
\dot{\zeta}_i&=\sum_{j\in\mathcal{N}_i}a_{ij}(x_j-x_i)+v_i-\nabla f_i(x_i),
\label{eq:dyncamic-protocol-3}\\
\dot{\eta}_i&=x_i-\zeta_i
\label{eq:dyncamic-protocol-4}
\end{align}
\end{subequations}
with $v_i(t)\in\mathds{R}^{n}$, $\zeta_i(t)\in\mathds{R}^{n}$, $\eta_i(t)\in\mathds{R}^n$, where the gains $K_1,\ldots,K_4\in\mathds{R}^{m\times n}$ are to be designed. As described later in Section~\ref{sec:4}, this protocol generates a surrogate for the optimizer, and is designed so that the state tracks this surrogate in a distributed fashion.

\begin{asmp}
\label{asmp:initial-setting}
The initial value of the controller state $v_i(0)$ is chosen such that $\sum_{i=1}^{N} v_i(0)=0$. 
\end{asmp}
As a typical choice, one can take $v_i(0)=0,\ i=1,\ldots,N$. This assumption is required to establish the result in Lemma~\ref{lem:1}.
\begin{remark}
Parts of the dynamic protocol \eqref{eq:dyncamic-protocol-2}-\eqref{eq:dyncamic-protocol-3} are typical choices in the distributed optimal coordination literature \citep{Li-TAC,Zhang-Cyb}.
Nevertheless, the whole controller structure \eqref{eq:dyncamic-protocol} is well-tailored for ease of robust-theoretic analysis as in this paper, motivated by the LMI-based optimal steady-state control in Nash seeking problems \citep{Romano2,Romano-LMI}. Thanks to a novel controller structure presented here, we can naturally apply robust control techniques against the sector-bounded nonlinearities. 
\end{remark}
\section{Optimality and Convergence Analysis}
\label{sec:4}
\subsection{Optimality Analysis}
Firstly, notice that the closed-loop multi-agent systems \eqref{eq:dynamics} under the control protocol \eqref{eq:dyncamic-protocol} is expressed as 
\begin{subequations}
\label{eq:collective-1}
\begin{align}
\dot{x}&=(I\otimes A)x+(I\otimes B)u-(I\otimes B)w,\label{eq:dynamics-with-w1}\\
\dot{v}&=-(L\otimes I)x,
\label{eq:dynamics-with-v}
\\
\dot{\zeta}&=-(L\otimes I)x+v-\nabla f(x),\\
\dot{\eta}&=x-\zeta,\\
w&=\phi'(u,t)
\end{align}
\end{subequations}
with the collective notation
\begin{align}
    x:=
    \begin{bmatrix}
        x_1\\
        x_2\\
        \vdots\\
        x_N
    \end{bmatrix}, \ 
        u:=
    \begin{bmatrix}
        u_1\\
        u_2\\
        \vdots\\
        u_N
    \end{bmatrix}, \ 
        w:=
    \begin{bmatrix}
        w_1\\
        w_2\\
        \vdots\\
        w_N
    \end{bmatrix}, \ 
        \zeta:=
    \begin{bmatrix}
        \zeta_1\\
        \zeta_2\\
        \vdots\\
        \zeta_N
    \end{bmatrix},\ 
        \eta:=
    \begin{bmatrix}
        \eta_1\\
        \eta_2\\
        \vdots\\
        \eta_N
    \end{bmatrix}
    .
\end{align}
The next lemma establishes the relationship between the optimizer and the constant solution vector to a certain simultaneous equation.
\begin{lemma}
\label{lem:1}
Assume that there exist the constant vectors $\overline{x}\in\mathds{R}^{nN},\overline{v}\in\mathds{R}^{nN},\overline{\zeta}\in\mathds{R}^{nN},\overline{\eta}\in\mathds{R}^{nN}$ and $\overline{u}\in\mathds{R}^{mN}$ satisfying:
\begin{subequations}
\label{eq:steady-state1}
\begin{align}
0&=(I\otimes A)\overline{x}+(I\otimes B)\overline{u},\label{eq:ss1}\\
0&=-\color{black}(L\otimes I)\overline{x},\label{eq:ss2}\\
0&=-\color{black}(L\otimes I)\overline{x}+\overline{v}-\nabla f(\overline{x}),\label{eq:ss3} \\
0&=\overline{x}-\overline{\zeta},\label{eq:ss4}\\
\overline{u}&=(I\otimes K_1)\overline{x}+(I\otimes K_2)\overline{v}+(I\otimes K_3)\overline{\zeta}+(I\otimes K_4)\overline{\eta}.
\end{align}
\end{subequations}
Then, the constant vector $\overline{x}$ is given by $\overline{x}=\mathds{1}_N\otimes x^\star$ uniquely, where $x^{\star}$ coincides with the (unique) optimizer of $(\mathbf{OP})$.
\end{lemma}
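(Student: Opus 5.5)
The plan is to read off the structure of $\overline{x}$ from the consensus equation \eqref{eq:ss2}, and then use \eqref{eq:ss3} together with a conserved quantity of the closed loop to obtain the first-order optimality condition of $(\mathbf{OP})$. Equations \eqref{eq:ss1}, \eqref{eq:ss4} and the expression for $\overline{u}$ only fix $\overline{u},\overline{\zeta},\overline{\eta}$. They play no role in determining $\overline{x}$.

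\emph{Step 1: consensus.} Since $\mathcal{G}$ is connected, $\ker L=\mathrm{span}\{\mathds{1}_N\}$, and hence $\ker(L\otimes I_n)=\{\mathds{1}_N\otimes z : z\in\mathds{R}^n\}$. Therefore \eqref{eq:ss2} gives $\overline{x}=\mathds{1}_N\otimes z$ for some $z\in\mathds{R}^n$. Substituting into \eqref{eq:ss3}, the Laplacian term vanishes and we get $\overline{v}=\nabla f(\mathds{1}_N\otimes z)$. Componentwise this reads $\overline{v}_i=\nabla f_i(z)$.

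\emph{Step 2: the conserved quantity (main subtlety).} We must show $\sum_{i}\overline{v}_i=0$. This does not follow from \eqref{eq:steady-state1} alone, since \eqref{eq:ss3} holds for any $z$ with the corresponding $\overline{v}$. It is exactly where Assumption~\ref{asmp:initial-setting} enters. From \eqref{eq:dynamics-with-v} and $\mathds{1}_N^{\sf T}L=0$,
\begin{align*}
\frac{d}{dt}(\mathds{1}_N^{\sf T}\otimes I_n)v=-(\mathds{1}_N^{\sf T}L\otimes I_n)x=0 .
\end{align*}
Thus $\sum_i v_i(t)=\sum_i v_i(0)=0$ for all $t\ge 0$, so the affine subspace $\{v:\sum_i v_i=0\}$ is invariant. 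The constant vector $\overline{v}$ is to be understood as an equilibrium value of the closed loop \eqref{eq:collective-1} generated from admissible initial conditions. It therefore lies in this invariant set, giving $(\mathds{1}_N^{\sf T}\otimes I_n)\overline{v}=0$. I will state this interpretation explicitly in the proof.

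\emph{Step 3: optimality and uniqueness.} Summing $\overline{v}_i=\nabla f_i(z)$ over $i$ gives $\sum_{i=1}^N\nabla f_i(z)=\nabla f(z)=0$ for the aggregate objective of $(\mathbf{OP})$. By Assumption~\ref{asmp:objf}, each $f_i$ is strongly convex, so $f=\sum_i f_i$ is strongly convex with modulus $N\mu$. Its stationary point is then its unique global minimizer, and hence $z=x^\star$. Uniqueness of $\overline{x}$ follows: any other solution must also have the form $\mathds{1}_N\otimes z'$ with $\nabla f(z')=0$, and strong convexity gives $z'=x^\star$. This yields $\overline{x}=\mathds{1}_N\otimes x^\star$.
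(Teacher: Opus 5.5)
Your proposal is correct and follows the paper's proof essentially step for step. The common route is: $\overline{x}\in\ker(L\otimes I)$ gives $\overline{x}=\mathds{1}_N\otimes x^\#$; applying $(\mathds{1}_N^{\sf T}\otimes I)$ to \eqref{eq:ss3} and using the conservation $\sum_i v_i(t)\equiv 0$ from Assumption~\ref{asmp:initial-setting} gives $\sum_i\nabla f_i(x^\#)=0$; strong convexity then identifies $x^\#=x^\star$. You also state explicitly that $\sum_i\overline{v}_i=0$ is an extra constraint on the reference point rather than a consequence of \eqref{eq:steady-state1} alone, which the paper's proof uses only implicitly, and you are right to make it explicit.
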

\begin{proof}
We prove this lemma by following a similar line to \citep{Li-TAC}. From \eqref{eq:ss2}, $\overline{x}\in\mathrm{ker}(L)$ which implies $\overline{x}\in\mathrm{im}\{\mathds{1}\}$. Thus $\overline{x}$ can be expressed as $\overline{x}=\mathds{1}_N\otimes x^\#$ with a certain $x^\#$. Also, multiplying $(\mathds{1}^{\sf T}\otimes I)$ to \eqref{eq:ss3} yields 
\begin{align}
\label{eq:first-order-optimality}
    0=(\mathds{1}^{\sf T}\otimes I)\nabla f(\overline{x})=\sum_{i=1}^{N}\nabla_i f_i(x^\#)=0,
\end{align}
where we used $\sum_{i=1}^n v_i(t)=(\mathds{1}\otimes I)v(t)\equiv0~\forall t\geq 0$ under Assumption~\ref{asmp:initial-setting} and \eqref{eq:dynamics-with-v}. Further, under Assumption~\ref{asmp:objf}, since $\sum_{i=1}^Nf(z)$ is strongly convex with respect to $z:=x_1=x_2=\cdots=x_N$, and \eqref{eq:first-order-optimality} is a necessary and sufficient condition that $x^\#$ is the optimizer of $(\mathbf{OP})$. 
\end{proof}
\begin{remark}
The algebraic condition \eqref{eq:steady-state1} characterizes a particular equilibrium (steady-state) candidate of \eqref{eq:collective-1}, and it corresponds to an actual equilibrium only when $\phi'(\overline{u},t)\equiv 0~\forall t\geq 0$ is satisfied, including the case without input nonlinearities. In the presence of possibly time-varying input nonlinearities, this condition is generally not consistent with the original dynamics \eqref{eq:dynamics-with-w1}, which means that the equilibrium point does not exist in general, since the nonlinear term $w(\overline{u},t)$ does not vanish. Hence, the exact convergence to the optimizer is not achievable, which motivates the robust practical (bounded) distributed optimal coordination formulation in Problem~\ref{prob:1}.
\end{remark}
\subsection{Convergence Analysis}
In this subsection, we analyze the convergence properties of the closed-loop system and show that the proposed controller solves the robust distributed optimal coordination problem (Problem \ref{prob:1}). Following the discussion in Lemma~\ref{lem:1}, the constant vector $(\overline{x},\overline{v},\overline{\zeta},\overline{\eta})$ serves as the reference point for the subsequent error analysis. Particularly, our convergence analysis is based on a shifted system around the reference point, a coordinate transformation associated with the graph Laplacian, and an ISS argument (refer to e.g. \citet{Khalil2002}). 

Let us define the shifted variables $\tilde{x}:=x-\overline{x},\ \tilde{v}:=v-\overline{v}, \ \tilde{\zeta}:=\zeta-\overline{\zeta}, \ \tilde{\eta}:=\eta-\overline{\eta}$, and $\tilde{u}:=u-\overline{u}$. We then introduce the shifted system: 
\begin{subequations}
\label{eq:collective-error}
\begin{align}
\dot{\tilde{x}}&=(I\otimes A)\tilde{x}+(I\otimes B)\tilde{u}-(I\otimes B)\tilde{w}+(I\otimes B){w}^\star\\
\dot{\tilde{v}}&=-\color{black}(L\otimes I)\tilde{x}\\
\dot{\tilde{\zeta}}&=-\color{black}(L\otimes I)\tilde{x}+\tilde{v}-\ell \tilde{x}+\ell \tilde{g}\\
\dot{\tilde{\eta}}&=\tilde{x}-\tilde{\zeta},\\
\tilde{u}&=(I\otimes K_1)\tilde{x}+(I\otimes {K}_2)\tilde{v}+(I\otimes K_3)\tilde{\zeta}+(I\otimes K_4)\tilde{\eta}
\end{align}
\end{subequations}
where 
\begin{subequations}
\begin{align*}
    \tilde{w}(t)&:=\phi'(u,t)-\phi'(\overline{u},t),\\
    \tilde{g}(t)&:=\psi(x)-\psi(\overline{x}),\\
    w^\star(t)&:=\phi'(\overline{u},t).
\end{align*}
\end{subequations}
It should be noted that the input nonlinearity $\phi'(u,t)$ is possibly time-varying, and thus $w^\star(t)$ is a time-varying signal, while $\overline{u}$ is a constant vector.

Further, to exploit the block-diagonal structure, 
applying the coordinate transformation with $U\otimes I$ to \eqref{eq:collective-error} yields
\begin{subequations}
\begin{align}
\dot{\boldsymbol{x}}&=(I\otimes A)\boldsymbol{x}+(I\otimes B)\boldsymbol{u}-(I\otimes B)\boldsymbol{w}+(I\otimes B)\boldsymbol{w}^\star\\
\dot{\boldsymbol{v}}&=-\color{black}(\Lambda\otimes I)\boldsymbol{x}\\
\dot{\boldsymbol{\zeta}}&=-\color{black}(\Lambda\otimes I)\boldsymbol{x}+\boldsymbol{v}-\ell \boldsymbol{x}+\ell \boldsymbol{g}\\
\dot{\boldsymbol{\eta}}&=\boldsymbol{x}-\boldsymbol{\zeta},\\
\boldsymbol{u}&=(I\otimes K_1)\boldsymbol{x}+(I\otimes K_2)\boldsymbol{v}+(I\otimes K_3)\boldsymbol{\zeta}+(I\otimes K_4)\boldsymbol{\eta},
\\
\boldsymbol{w}&=(U^{\sf T}\otimes I)\{\phi'(u,t)-\phi'(\overline{u},t)\},
\label{eq:w-def-incr}
\\
\boldsymbol{g}&=(U^{\sf T}\otimes I)\{{\psi}(x,t)-\psi(\overline{x},t)\}
\label{eq:g-def-incr}
\end{align}
\end{subequations}
with $\boldsymbol{x}:=(U^{\sf T}\otimes I)\tilde{x}, \ \boldsymbol{v}:=(U^{\sf T}\otimes I)\tilde{v}, 
\ \boldsymbol{\zeta}:=(U^{\sf T}\otimes I)\tilde{\zeta}, 
\ \boldsymbol{\eta}:=(U^{\sf T}\otimes I)\tilde{\eta}, 
\ \boldsymbol{u}:=(U^{\sf T}\otimes I)\tilde{u}$, $\boldsymbol{w}^\star:=(U^{\mathsf{T}}\otimes I)w^\star$. Recall that $\sum_{i=1}^Nv_i(0)=0$ from Assumption~\ref{asmp:initial-setting}. This can be rewritten as $(\mathds{1}^{\sf T}\otimes I)\boldsymbol{v}=0$, namely, $\boldsymbol{v}_1(t)\equiv0\ \forall t\geq 0$. We define a new state vector $\boldsymbol{v}_{2:N}:=[\boldsymbol{v}_2^{\sf T},\ldots,\boldsymbol{v}_N^{\sf T}]^{\sf T}$ by removing $\boldsymbol{v}_1$ from $\boldsymbol{v}$, and we rewrite the collective representation as
\begin{subequations}
\label{eq:collective-dynamics1}
\begin{align}
\dot{\xi}_1&=
\begin{bmatrix}
A&0&0 \\
-\ell I & 0 & 0 \\
I & -I & 0
\end{bmatrix}
\xi_1+
\begin{bmatrix}
    B\\ 0 \\0
\end{bmatrix}
\boldsymbol{u}_1-
\begin{bmatrix}
B & 0\\
0 & -\ell I \\
0 & 0
\end{bmatrix}
\begin{bmatrix}
\boldsymbol{w}_1\\
    \boldsymbol{g}_1
\end{bmatrix}+
\begin{bmatrix}
B \\ 0 \\ 0
\end{bmatrix}
\boldsymbol{w}_1^\star
, 
\\
\dot{\xi}_i&=
\begin{bmatrix}
    A& 0 & 0 & 0\\
    -\lambda_i I & 0 & 0& 0\\
    -(\lambda_i+\ell) I & I & 0  & 0\\
    I & 0 & -I & 0
\end{bmatrix}
\xi_i+
\begin{bmatrix}
    B \\ 0 \\ 0\\0 
\end{bmatrix}
\boldsymbol{u}_i-
\begin{bmatrix}
B & 0 \\
0 & 0 \\
0 & -\ell I \\0 & 0
\end{bmatrix}
\begin{bmatrix}
\boldsymbol{w}_i\\
\boldsymbol{g}_i
\end{bmatrix}+
\begin{bmatrix}
    B \\ 0 \\ 0 \\ 0
\end{bmatrix}
\boldsymbol{w}_i^\star
\\
\boldsymbol{w}&=(U^{\sf T}\otimes I)\{\phi'(u,t)-\phi'(\overline{u},t)\},\\
\boldsymbol{g}&=(U^{\sf T}\otimes I)\{{\psi}(x,t)-\psi(\overline{x},t)\}
\end{align}
\end{subequations}
where the state vectors are rearranged as 
\begin{align}
    \xi_1:=\begin{bmatrix}
        \boldsymbol{x}_1\\
        \boldsymbol{\zeta}_1\\
        \boldsymbol{\eta}_1
    \end{bmatrix},
    \
    \xi_i:=\begin{bmatrix}
        \boldsymbol{x}_i\\
        \boldsymbol{v}_i\\
        \boldsymbol{\zeta}_i\\
        \boldsymbol{\eta}_i
    \end{bmatrix},\ i=2,\ldots,N.
\end{align}

The closed-loop multi-agent system can also be expressed as 
\begin{subequations}
\label{eq:collective-dynamics2}
\begin{align}
\dot{\xi}_1&=(\mathcal{A}_1+\mathcal{B}_1 \check{K})\xi_1-[\mathcal{B}_1~\mathcal{L}_1]\begin{bmatrix}
    \boldsymbol{w}_1\\
    \boldsymbol{g}_1
\end{bmatrix}
+\mathcal{B}_1\boldsymbol{w}_1^{\star}\\
\dot{\xi}_i&=:(\mathcal{A}_i+\mathcal{B}' K)\xi_i-[\mathcal{B}'~\mathcal{L}']\begin{bmatrix}
    \boldsymbol{w}_i\\
    \boldsymbol{g}_i
\end{bmatrix}
+\color{black}\mathcal{B}' \boldsymbol{w}_i^\star,\notag\quad\quad\quad i=2,\ldots,N, \\
\boldsymbol{w}&=(U^{\sf T}\otimes I)\{\phi'(u,t)-\phi'(\overline{u},t)\},\\
\boldsymbol{g}&=(U^{\sf T}\otimes I)\{{\psi}(x,t)-\psi(\overline{x},t)\}
\end{align}
\end{subequations}
with $K:=[K_1,K_2,K_3,K_4]$ and $\check{K}:=[K_1,K_3,K_4]$, where the coefficient matrices are defined by 
\begin{align*}
\mathcal{A}_1&:=
\begin{bmatrix}
    A & 0 & 0 \\
    -\ell I & 0 &0\\
    I & -I & 0
\end{bmatrix}, 
\ 
\mathcal{A}_i:=
\begin{bmatrix}
A & 0 & 0 & 0\\
-\lambda_i I & 0 & 0 &0 \\
-(\lambda_i+\ell)I & I & 0 & 0 \\
I & 0 & -I & 0
\end{bmatrix},\\
\mathcal{B}_1&:=\begin{bmatrix}
    B \\ 0 \\0 
\end{bmatrix}, 
\ 
\mathcal{B}':=\begin{bmatrix}
    B \\ 0 \\0  \\ 0
\end{bmatrix}, \ 
\mathcal{L}_1:=\begin{bmatrix}
    0 \\ -\ell I \\ 0
\end{bmatrix}, 
\ 
\mathcal{L}':=\begin{bmatrix}
    0 \\ 0 \\-\ell I \\ 0
\end{bmatrix}, \ i=2,\ldots,N.
\end{align*}

One can observe that, in the shifted system \eqref{eq:collective-dynamics2}, $\boldsymbol{w}(t)$ and $\boldsymbol{g}(t)$ are viewed as the \emph{inner-loop} feedback inputs passed through the sector-bounded nonlinearity. On the other hand, the term $\boldsymbol{w}^\star(t)$ emerges as a \emph{fictitious external} input, which may be time-varying. Accordingly, we will perform the ISS analysis with respect to $\boldsymbol{w}^{\star}(t)$. To be more precise, the \emph{solvability} of Problem~\ref{prob:1} is reduced to the input-to-state stability of \eqref{eq:collective-dynamics2} with respect to $\boldsymbol{w}^\star(t)$, which is a bounded fictitious external input (as we observe later). We state the main result of this paper as follows.
\begin{theorem}
\label{thm:1}
Under Assumptions~\ref{asmp:stabilizability}-\ref{asmp:initial-setting}, for a given feedback gain $K$, assume that there exists a positive definite solution $P=P^{\sf T}\in\mathds{R}^{4n\times 4n}$ and a positive definite solution $\check{P}=\check{P}^{\sf T}\in\mathds{R}^{3n\times 3n}$ to the following matrix inequalities:
\begin{align}
\label{eq:LMI1}
&\begin{bmatrix}
    (\mathcal{A}_{i}+\mathcal{B}'K)^{\sf T}P+P(\mathcal{A}_i+\mathcal{B}'K) & 
    \mathcal{K}'^{\sf T}-P
    \begin{bmatrix}\mathcal{B}' & \mathcal{L}'\end{bmatrix}\\
    \mathcal{K}'-
    \begin{bmatrix}
    \mathcal{B}'^{\sf T}\\
    \mathcal{L}'^{\sf T}
    \end{bmatrix}P& -2I
\end{bmatrix}
\prec 0, \ \ \ i=2,\ldots,N,
\end{align}
and 
\begin{align}
    \label{eq:LMI2}
&\begin{bmatrix}
 (\mathcal{A}_1+\mathcal{B}_1\check{K})^{\sf T}\check{P}+\check{P}(\mathcal{A}_1+\mathcal{B}_1\check{K}) & \check{\mathcal{K}}^{\sf T}-\check{P}\begin{bmatrix}
     \mathcal{B}_1 & \mathcal{L}_1
 \end{bmatrix}\\
 \check{\mathcal{K}}-
 \begin{bmatrix}
     \mathcal{B}_1^{\sf T}\\
     \mathcal{L}_1^{\sf T}
 \end{bmatrix}
 \check{P} & -2I
\end{bmatrix}
\prec 0,
\end{align}
where 
\begin{align*}
\mathcal{K}'&:=\begin{bmatrix}
    \gamma K\\  
    \mu' J
\end{bmatrix}, \ 
\check{\mathcal{K}}:=\begin{bmatrix}
    \gamma\check{K} \\ 
    \mu'\check{J}
\end{bmatrix},\ 
J:=\begin{bmatrix}
    I_n & 0_n & 0_n &0_n
\end{bmatrix}, \ 
\check{J}:=\begin{bmatrix}
    I_n & 0_n & 0_n
\end{bmatrix}.
\end{align*}
Then, the multi-agent system \eqref{eq:collective-dynamics1} solves the robust practical distributed optimal coordination problem (Problem~\ref{prob:1}). 
\end{theorem}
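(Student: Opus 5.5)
We plan a mode-wise quadratic Lyapunov argument combined with the S-procedure for the two incremental sector conditions, treating $\boldsymbol{w}^\star$ as an exogenous input.

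\textbf{Step 1: Lyapunov candidate.} Take
\[
V(\xi)=\xi_1^{\sf T}\check P\xi_1+\sum_{i=2}^N\xi_i^{\sf T}P\xi_i ,
\]
which is legitimate because $\boldsymbol{v}_1\equiv0$ by Assumption~\ref{asmp:initial-setting}.

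\textbf{Step 2: Sector bounds in the transformed coordinates.} Write $\boldsymbol{u}_i=K\xi_i$ for $i\geq2$ and $\boldsymbol{u}_1=\check K\xi_1$. Write $\boldsymbol{x}_i=J\xi_i$ and $\boldsymbol{x}_1=\check J\xi_1$. Since $U\otimes I$ is orthogonal, the incremental sector conditions \eqref{eq:incr-sect-bound-input} (with $u'=u$, $u=\overline u$) and \eqref{eq:Psi-sect} (with $x'=x$, $x=\overline x$) are invariant under it. They therefore become
\[
\sum_{i}\boldsymbol{w}_i^{\sf T}(\boldsymbol{w}_i-\gamma\boldsymbol{u}_i)\le0,\qquad \sum_i\boldsymbol{g}_i^{\sf T}(\boldsymbol{g}_i-\mu'\boldsymbol{x}_i)\le0 .
\]
We will use these only in summed form. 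They hold only after summation over $i$, not mode by mode, so the S-procedure multipliers must be common to all modes (here fixed to $1$). This is consistent with the $-2I$ block and the $\mathcal{K}'$ coupling in \eqref{eq:LMI1}--\eqref{eq:LMI2}.

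\textbf{Step 3: Dissipation inequality.} Set $\chi_i:=[\xi_i^{\sf T},\boldsymbol{w}_i^{\sf T},\boldsymbol{g}_i^{\sf T}]^{\sf T}$. Adding
\[
-2\sum_i\boldsymbol{w}_i^{\sf T}(\boldsymbol{w}_i-\gamma\boldsymbol{u}_i)-2\sum_i\boldsymbol{g}_i^{\sf T}(\boldsymbol{g}_i-\mu'\boldsymbol{x}_i)\ge0
\]
to $\dot V$ gives
\[
\dot V\le\sum_i\chi_i^{\sf T}M_i\chi_i+2\sum_i\xi_i^{\sf T}P_i\mathcal{B}_i\boldsymbol{w}_i^\star ,
\]
where $M_i$ is exactly the left-hand side of \eqref{eq:LMI1} (respectively \eqref{eq:LMI2} for $i=1$). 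Strict negativity and finiteness of the index set give a uniform $c>0$ with $M_i\preceq-cI$. This yields
\[
\dot V\le-c\|\xi\|^2+2\bar p\|B\|\,\|\xi\|\,\|\boldsymbol{w}^\star\| .
\]
Applying Young's inequality, $\dot V\le-\tfrac c2\|\xi\|^2+\kappa\|\boldsymbol{w}^\star\|^2$. Hence the system is ISS with respect to $\boldsymbol{w}^\star$, and
\[
\limsup_{t\to\infty}\|\xi(t)\|\le\sqrt{\tfrac{2\kappa\overline\lambda_{\max}}{c\,\underline\lambda_{\min}}}\;\sup_t\|\boldsymbol{w}^\star(t)\| ,
\]
where $\overline\lambda_{\max}$ and $\underline\lambda_{\min}$ are the extreme eigenvalues of $\mathrm{diag}(\check P,P,\dots,P)$.

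\textbf{Step 4: Existence of the reference point.} We need a constant solution of \eqref{eq:steady-state1}. Set $\overline x=\mathds 1\otimes x^\star$ and $\overline\zeta=\overline x$. Set $\overline v=\nabla f(\overline x)$, which has zero sum by optimality, so it is consistent with the constraint $\sum_i v_i=0$; we also take the trajectory-consistent component so that $\boldsymbol{v}_1=0$. Full row rank of $B$ gives $\overline u$ with $A x^\star+B\overline u_i=0$. For $\overline\eta$ we must solve $K_4\overline\eta_i=\overline u_i-K_1x^\star-K_2\overline v_i-K_3x^\star$. The top-left block of \eqref{eq:LMI2} forces $\mathcal A_1+\mathcal B_1\check K$ to be Hurwitz, and we plan to use this to show that $K_4$ has full row rank. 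A singular $K_4$ with a left null vector would produce an eigenvalue at zero through the $\eta$-integrator chain, which we expect to contradict Hurwitzness. Lemma~\ref{lem:1} then identifies $\overline x=\mathds1\otimes x^\star$.

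\textbf{Step 5: Boundedness of $w^\star$ and conclusion.} By Assumption~\ref{asmp:input-nonlinearity}, $\phi'$ lies in the sector $[0,\gamma]$ with $\phi'(0,t)=0$, so $\|w^\star(t)\|\le\gamma\|\overline u\|$ uniformly in $t$. Since $\|x-\mathds 1\otimes x^\star\|=\|\tilde x\|=\|\boldsymbol{x}\|\le\|\xi\|$, Problem~\ref{prob:1} holds with
\[
\epsilon=\sqrt{\tfrac{2\kappa\overline\lambda_{\max}}{c\,\underline\lambda_{\min}}}\;\gamma\|\overline u\| .
\]

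\textbf{Main obstacle.} The delicate parts are Step 4, the solvability of the steady-state equations for $\overline\eta$ (the rank argument on $K_4$), and the careful treatment of the $\boldsymbol{v}_1$ removal. The LMI-to-ISS computation itself is routine.
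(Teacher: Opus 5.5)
Your Steps 1--3 and 5 reproduce the paper's proof. You use the same block-diagonal $\mathcal{P}=\mathrm{diag}\{\check P,I_{N-1}\otimes P\}$, the same unit-multiplier S-procedure on the \emph{aggregated} incremental sector bounds (you rightly note they do not hold mode by mode), ISS with respect to $\boldsymbol{w}^\star$, and $\|\boldsymbol{w}^\star\|\le\gamma\|\overline{u}\|$. Using Young's inequality instead of the $\theta$-splitting and Khalil's Theorem~4.19 is cosmetic. The paper never constructs the reference point inside the proof: it is a hypothesis of Lemma~\ref{lem:1} and is only asserted in a later remark. Your choice $\overline{v}=\nabla f(\mathds{1}_N\otimes x^\star)$ gives $\sum_i\overline{v}_i=0$ and hence $\boldsymbol{v}_1\equiv0$, which makes explicit a constraint that Lemma~\ref{lem:1} silently imports from the trajectory.

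However, the step you flag as the main obstacle fails as planned. Assumption~\ref{asmp:stabilizability} only gives $m\ge n$, and $K_4\in\mathds{R}^{m\times n}$ can have full row rank only when $m=n$. For $m>n$, $K_4$ always has a left null vector $y$. Your mechanism produces a zero eigenvalue of $\mathcal{A}_1+\mathcal{B}_1\check K$ only when $y\in\mathrm{im}\,B^{\sf T}$, so Hurwitzness gives no contradiction. For example, take $n=1$, $m=2$, $A=0$, $B=[1,\,0]$, $\ell=1$, $K_1=[-3,0]^{\sf T}$, $K_3=[1,0]^{\sf T}$, $K_4=[-1,0]^{\sf T}$. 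The characteristic polynomial is $s^3+3s^2+2s+1$, which is Hurwitz.

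Your order of construction also breaks when $\ker B\neq\{0\}$: fixing $\overline{u}_i$ from $Ax^\star+B\overline{u}_i=0$ and then solving $K_4\overline{\eta}_i=\cdots$ can be infeasible. In the example, $\overline{u}_i=[0,1]^{\sf T}$ together with a $K_2$ whose second row is zero puts the right-hand side outside $\mathrm{im}\,K_4$. Your argument is fine when $m=n$, since then $B$ is invertible.

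The right object is the square matrix $BK_4$. If $BK_4c=0$ with $c\neq0$, then $[0^{\sf T},0^{\sf T},c^{\sf T}]^{\sf T}$ lies in the kernel of $\mathcal{A}_1+\mathcal{B}_1\check K$. The top-left block of \eqref{eq:LMI2} makes this matrix Hurwitz, so $BK_4$ is nonsingular. Then solve $BK_4\overline{\eta}_i=-(A+B(K_1+K_3))x^\star-BK_2\overline{v}_i$ first. Afterwards \emph{define} $\overline{u}_i:=(K_1+K_3)x^\star+K_2\overline{v}_i+K_4\overline{\eta}_i$, which satisfies \eqref{eq:ss1} by construction.
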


\begin{proof}
Let us introduce a candidate Lyapunov function $V(\xi)$ by
\begin{align}
    V(\xi)&:=\xi^{\sf T}\mathcal{P}\xi,
\end{align}
where $\mathcal{P}:=\mathrm{diag}\{\check{P},I_{N-1}\otimes P\}\succ 0$ with the solutions $\check{P}$ and $P$ to the LMIs \eqref{eq:LMI1}-\eqref{eq:LMI2}. 
Then, the derivative of $V(\xi)$ along the trajectory is given by 
\begin{align}
\dot{V}(\xi)=&2\sum_{i=2}^N \xi^{\sf T}_iP\left\{
(\mathcal{A}_i+\mathcal{B}'K)\xi_i-\begin{bmatrix}
    \mathcal{B}' & \mathcal{L}'
\end{bmatrix}\boldsymbol{\eta}_i+\mathcal{B}'\boldsymbol{w}_i^{\star}
\right\} \notag \\ 
&+2\xi_1^{\sf T}\check{P}\{(\mathcal{A}_1+\mathcal{B}_1\check{K})\xi_1-\begin{bmatrix}
    \mathcal{B}_1 & \mathcal{L}_1
\end{bmatrix}
\boldsymbol{\eta}_1+\mathcal{B}_1\boldsymbol{w}_1^\star
\}
    \label{eq:Vdot}
\end{align}
where $\boldsymbol{\eta}_{i}:=[\boldsymbol{w}_i^{\sf T},\boldsymbol{g}_i^{\sf T}]^{\sf T}$. Pre-multiplying $[\xi_i^{\sf T},\boldsymbol{\eta}_i^{\sf T}]$ and post-multiplying $[\xi_i^{\sf T},\boldsymbol{\eta}_i^{\sf T}]^{\sf T}$ to the left-hand side of \eqref{eq:LMI1} yield
\begin{align}
\label{eq:der1}
&\xi_i^{\sf T}\{(\mathcal{A}_i+\mathcal{B}'K)^{\sf T}P+P(\mathcal{A}_i+\mathcal{B}'K)\}\xi_i+\xi_i^{\sf T}(\mathcal{K}'^{\sf T}-P\begin{bmatrix}
    \mathcal{B}' & \mathcal{L}'
\end{bmatrix})\boldsymbol{\eta}_i+\boldsymbol{\eta}_i^{\sf T}\left(
\mathcal{K}'-\begin{bmatrix}
    \mathcal{B}'^{\sf T}\\
    \mathcal{L}'^{\sf T}
\end{bmatrix}
\right)\xi_i-2\boldsymbol{\eta}_i^{\sf T}\boldsymbol{\eta}_i\leq 0
\end{align}
for $i=2,\ldots,N$. Also from \eqref{eq:LMI2}, 
\begin{align}
\label{eq:der2}
\xi_1^{\sf T}\left\{(\mathcal{A}_1+\mathcal{B}_1\check{K})^{\sf T}\check{P}+\check{P}(\mathcal{A}_1+\mathcal{B}_1\check{K})\right\}+\xi_1^{\sf T}(\check{\mathcal{K}}^{\sf T}-\check{P}\begin{bmatrix}
\mathcal{B}_1 & \mathcal{L}_1
\end{bmatrix})\eta_1+\eta_1^{\sf T}
\left(\check{\mathcal{K}}-
\begin{bmatrix}
    \mathcal{B}_1^{\sf T}\\
    \mathcal{L}_1^{\sf T}
\end{bmatrix}
\check{P}
\right)\xi_1-2\boldsymbol{\eta}_1^{\sf T}\boldsymbol{\eta}_1\leq 0
\end{align}
Further, observe that there exists a small constant $\rho>0$ satisfying
\begin{align}
\label{eq:LMI1-dash}
&\begin{bmatrix}
    (\mathcal{A}_{i}+\mathcal{B}'K)^{\sf T}P+P(\mathcal{A}_i+\mathcal{B}'K)+\rho I& 
    \mathcal{K}'^{\sf T}\color{black}-P
    \begin{bmatrix}\mathcal{B}' & \mathcal{L}'\end{bmatrix}\\
    \mathcal{K}'-
    \begin{bmatrix}
    \mathcal{B}'^{\sf T}\\
    \mathcal{L}'^{\sf T}
    \end{bmatrix}P& -2I
\end{bmatrix}
\preceq 0, \ \ \ i=2,\ldots,N,
\end{align}
\begin{align}
    \label{eq:LMI2-dash}
&\begin{bmatrix}
 (\mathcal{A}_1+\mathcal{B}_1\check{K})^{\sf T}\check{P}+\check{P}(\mathcal{A}_1+\mathcal{B}_1\check{K})+\rho I& \check{\mathcal{K}}^{\sf T}-\check{P}\begin{bmatrix}
     \mathcal{B}_1 & \mathcal{L}_1
 \end{bmatrix}\\
 \check{\mathcal{K}}-
 \begin{bmatrix}
     \mathcal{B}_1^{\sf T}\\
     \mathcal{L}_1^{\sf T}
 \end{bmatrix}
 \check{P} & -2I
\end{bmatrix}
\preceq 0,
\end{align}
if the strict matrix inequalities \eqref{eq:LMI1}-\eqref{eq:LMI2} have their solutions $P$ and $\check{P}$. 
By combining \eqref{eq:Vdot}-\eqref{eq:LMI2-dash}, we get
\begin{eqnarray}
\label{eq:Vdot-2}
    \dot{V}(\xi)+\rho \|\xi\|^2\color{black}\leq 2\boldsymbol{\eta}^{\sf T}(\boldsymbol{\eta}-\mathcal{K}\color{black}\xi)+2\xi^{\sf T}\mathcal{P}\mathcal{B}\boldsymbol{w}^{\star}
\end{eqnarray}
with $\mathcal{K}:=\mathrm{diag}\{\check{\mathcal{K}},I_{N-1}\otimes \mathcal{K}'\}$. From \eqref{eq:incr-sect-bound-input}, \eqref{eq:Psi-sect}, \eqref{eq:w-def-incr}, \eqref{eq:g-def-incr}, and orthogonality of $U$, the following incremental sector bound conditions also hold true. 
\begin{align}
\label{eq:bold-w-cond}
\boldsymbol{w}^{\sf T}(\boldsymbol{w}-\gamma\boldsymbol{u})&\leq 0\quad\forall \boldsymbol{u}\in\mathds{R}^{Nm},\\
\label{eq:bold-g-cond}
    \boldsymbol{g}^{\sf T}(\boldsymbol{g}-\mu'\boldsymbol{x})&\leq 0\quad\forall\boldsymbol{x}\in\mathds{R}^{Nn}.
\end{align}

Combining \eqref{eq:der1}-\eqref{eq:bold-g-cond} yields 
\begin{align*}
&2\boldsymbol{\eta}^{\sf T}(\boldsymbol{\eta}-\mathcal{K}\xi)\\
&\leq 2\boldsymbol{w}^{\sf T}(\boldsymbol{w}- \gamma\color{black}\mathrm{diag}\{\check{K},(I_{N-1}\otimes K)\}\xi)+2\boldsymbol{g}^{\sf T}(\boldsymbol{g}-\mu\{\mathrm{diag}\{\check{J},(I_{N-1}\otimes J)\}\xi)\\
&=2\boldsymbol{w}^{\sf T}(\boldsymbol{w}-\gamma\color{black}\boldsymbol{u})+2\boldsymbol{g}^{\sf T}(\boldsymbol{g}-\mu'\boldsymbol{x})\leq 0 \ \ \ \forall \xi,\boldsymbol{\eta}.
\end{align*}
Thus, 
\begin{align}
\dot{V}({\xi})&\leq -\rho \|\xi\|^2+2\|\mathcal{B}\|\max\{\overline{\lambda}(P), \overline{\lambda}(\check{P})\}\|\xi\|\|\boldsymbol{w}^\star\| \\
&\leq -(1-\theta)\rho\|\xi\|^2+\rho \theta\|\xi\|^2+2\|\mathcal{B}\|\max\{\overline{{\lambda}}(P), \overline{\lambda}(\check{P})\}\|\xi\|\|\boldsymbol{w}^\star\|
\ \ \ \ \forall t\geq 0 
\end{align}
is satisfied for some $\theta\in\left]0,1\right[$.

Next, we apply a well-known ISS criteria \citep{Khalil2002} to analyze the convergence. To this aim, define 
\begin{align*}
    \underline{\alpha}(\xi)&:=\min\{\underline{\lambda}(P),\underline{\lambda}(\check{P})\}\|\xi\|^2,\\
    \overline{\alpha}(\xi)&:=\max\{\overline{\lambda}(P),\overline{\lambda}(\check{P})\}\|\xi\|^2, 
      \\
        W(\xi)&:= (1-\theta)\rho\|\xi\|^2
,    \\
    r(\|\boldsymbol{w}^\star\|)&:=\frac{2\|\mathcal{B}\|\max\{\overline{\lambda}(P),\overline{\lambda}(\check{P})\}}{\theta \rho}\|\boldsymbol{w}^\star\|,
\end{align*}
and observe that 
\begin{subequations}
\label{eq:ISS-condition}
\begin{align}
\underline{\alpha}(\xi)&\leq V(\xi)\leq \overline{\alpha}(\xi),  \\
\dot{V}(\xi)&\leq-W(\xi) \ \ \ \ \ \forall \|\xi\|\geq r(\|\boldsymbol{w}^\star\|)
\end{align}
\end{subequations}
hold. Also, notice that $\underline{\alpha}(\xi), \overline{\alpha}(\xi)$ are the class-$\mathscr{K}_\infty$ functions, $r:\mathds{R}_+\rightarrow\mathds{R}_+$ is the class $\mathscr{K}$ function, and $W(\xi)$ is the positive definite function. Applying Theorem 4.19 in \citep{Khalil2002}, one can conclude that the closed-loop multi-agent system is ISS with respect to the external input $\boldsymbol{w}^\star(t)$ to the state $\xi(t)$. That is,
\begin{align*}
    \|\xi(t)\|\leq \kappa_1(\|\xi(0)\|,t)+\kappa_2(\sup_{0\leq \tau \leq t}\|\boldsymbol{w}^\star(\tau)\|), \ \ t\geq0
\end{align*}
    holds for some class-$\mathscr{KL}$ function $\kappa_1 $ and class-$\mathscr{K}$ function $\kappa_2$. Namely, this means that the closed-loop multi-agent system \eqref{eq:collective-dynamics1} is ISS from the input $\boldsymbol{w}^\star(t)$ to the state $\xi(t)$. To be more precise, the function $\kappa_2$ is given by 
    \begin{align}
    \kappa_2(\sup_{0\leq\tau\leq t}\|\boldsymbol{w}^\star(\tau)\|)
    =2\sqrt{\frac{\max\{\overline{\lambda}(P),\overline{\lambda}(\check{P})\}}{\min\{\underline{\lambda}(P),\underline{\lambda}(\check{P})\}}}\frac{\|\mathcal{B}\|\max\{\overline{\lambda}(P),\overline{\lambda}(\check{P})\}}{\theta \rho}\sup_{0\leq\tau\leq t}\|\boldsymbol{w}^\star(\tau)\|
    \end{align}
By recalling the sector condition in \eqref{eq:bold-w-cond}, one can observe that
\[
\sup_{0\leq \tau\leq t}\|\boldsymbol{w}^\star(\tau)\|\leq\gamma \|\overline{u}\|.
\] 
Thus, 
\begin{align}
\limsup_{t\rightarrow\infty} \|\xi(t)\|\leq 
2\sqrt{\frac{\max\{\overline{\lambda}(P),\overline{\lambda}(\check{P})\}}{\min\{\underline{\lambda}(P),\underline{\lambda}(\check{P})\}}}\frac{\|\mathcal{B}\|\max\{\overline{\lambda}(P),\overline{\lambda}(\check{P})\}}{ \rho}\|\overline{u}\|,
\end{align}
where we used $\theta\in\left]0,1\right[$. From the facts 
\[
\|\xi(t)\|\geq\|x(t)-\overline{x}(t)\|=\|x(t)-\mathds{1}_N\otimes x^\star\| \ \ \ \forall t \geq 0
\] and $\|U\|=\|U^{\sf T}\|=1$ due to orthogonality of $U$, we thus conclude:  
\begin{align}
\label{eq:x-xstar}
\limsup_{t\rightarrow\infty}\|x(t)-\mathds{1}_N\otimes x^{\star}\|\leq 2\sqrt{\frac{\max\{\overline{\lambda}(P),\overline{\lambda}(\check{P})\}}{\min\{\underline{\lambda}(P),\underline{\lambda}(\check{P})\}}}\frac{\|\mathcal{B}\|\max\{\overline{\lambda}(P),\overline{\lambda}(\check{P})\}}{\rho}\|\overline{u}\|.
\end{align}

Summarizing the above discussion, the bounded state agreement to the optimizer $x^\star$ is achieved, where the sub-optimality certificate exists as in the RHS in \eqref{eq:x-xstar}. This means that the robust distributed optimal coordination problem (Problem~\ref{prob:1}) is solved by the proposed dynamical protocol. This concludes the proof.
\end{proof}
\begin{remark}
One can verify that stabilizability of the pair $(A,B_0)$ guarantees the existence of the feedback gains $K$ and $\check{K}$ stabilizing the \emph{augmented} systems. Also, Hurwitz stability of $\mathcal{A}_1+\mathcal{B}_1\check{K}$, $\mathcal{A}_i+\mathcal{B}_i K$ $(i=2,\ldots,N)$ implies that the constant vectors $(\overline{x}, \overline{v}, \overline{\zeta}, \overline{\eta})$ satisfying the algebraic equations \eqref{eq:ss1}-\eqref{eq:ss4} uniquely exist due to the construction of \eqref{eq:collective-dynamics1}. 
\end{remark}
\begin{remark}
The LHSs of \eqref{eq:LMI1} and \eqref{eq:LMI2} are affine in $\lambda_i$ respectively. It thus suffices to check only at the extreme points $\lambda_2$ and $\lambda_N$, and the LMIs are free from the size of the network.
\end{remark}
\begin{remark}
If the gain $K$ is handled as a decision variable, the matrix inequalities \eqref{eq:LMI1}-\eqref{eq:LMI2} have some products of the decision variables $P$, $\check{P}$ and $K$. Some of BMI solvers are applicable e.g. \citet{BMI1}. Otherwise, one can check the LMI condition for a gain $K$, which is chosen according to trial and error. A similar trial and error procedure is sometimes required in synthesis of optimizing controller in the literature, such as \citep{Romano2}. It should also be noted that the initial guess can be easily obtained by applying a well-known change of variables technique to the matrix inequality in \eqref{eq:LMI1} (refer to e.g. \citet{boyd1994linear}). 
\end{remark}
\begin{remark}
The actual value of the sub-optimality bound (RHS in \eqref{eq:x-xstar}) depends on the input $\overline{u}$, and thus it is difficult to calculate a priori. Nevertheless, the existence of this a posteriori bound can guarantee the solvability of the considered problem by the proposed control protocol, even in the presence of the time-varying input nonlinearities. 
\end{remark}
\section{Numerical Simulation}
Let us consider five agents whose dynamics is given by 
\[
\dot{x}_i=
\begin{bmatrix}
0.2 & 0.6\\
-0.6 & 0
\end{bmatrix}x_i+\begin{bmatrix}
1 & 5\\
2 & 3
\end{bmatrix}\phi_i(u_i), \ i=1,\ldots, 5.
\] 
Inspired by \citep{Takaba-ICCAS}, we consider the input nonlinearity given in the form
\begin{align*}
[\phi_i(u_i,t)]_j:=0.8[u_i]_j+0.2\sin(2t)[u_i]_j,~j=1,2.
\end{align*}
This nonlinearity satisfies the sector condition with $\alpha=0.6, \beta=1$, and we set $\gamma=0.5$, which is a conservative estimation of the uncertainties. The communication is executed along the graph $\mathcal{G}$ depicted in Fig.~\ref{fig:graph-str}. 

\begin{figure}[pos=h]
    \centering
    \includegraphics[width=0.2\linewidth]{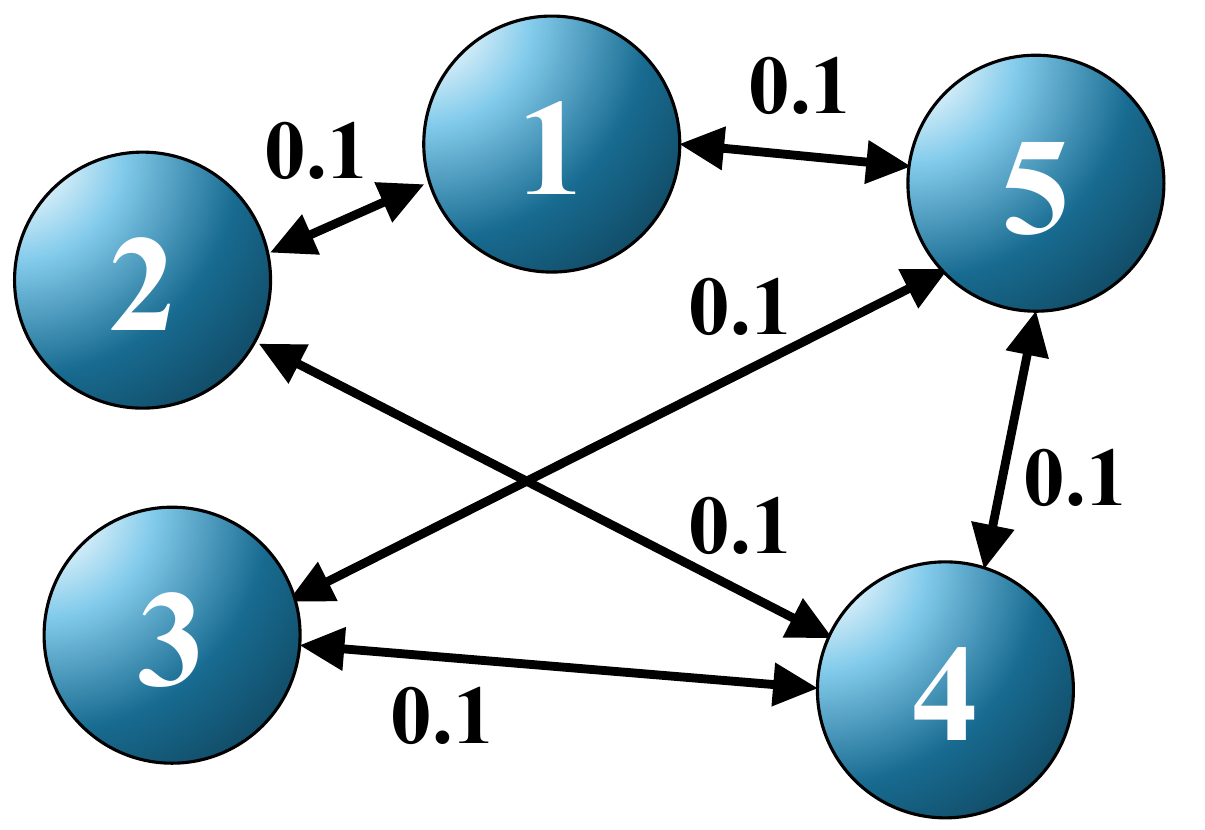}
    \caption{The topology of the graph $\mathcal{G}$}
    \label{fig:graph-str}
\end{figure}

The local objective functions are defined by 
\begin{align*}
f_1(x) &= \frac{1.1}{2}\left[(x_1+1)^2 + x_2^2\right], \
f_2(x) = \frac{1.1}{2}\left[(x_1+0.75)^2 + (x_2-0.25)^2\right], \\ 
f_3(x) &= \frac{1.1}{2}\left[(x_1+0.5)^2 + (x_2-0.5)^2\right],\
f_4(x) = \frac{1.1}{2}\left[(x_1+0.25)^2 + (x_2-0.75)^2\right],\\ 
f_5(x) &= \frac{1.1}{2}\left[x_1^2 + (x_2-1)^2\right].
\end{align*}
We thus set $\mu=1,\ell=1.1$, and $\rho=0.1$. By defining the objective functions as above, the optimizer of the resulting optimization problem $(\mathbf{OP})$ is given by $x^\star=[-0.5,0.5]^{\sf T}$. As an initial guess of the gain $K$, by applying a well-known change of variable techniques to matrix inequality \eqref{eq:LMI1}, we obtain 
\[
K=\left[
\begin{array}{cc:cc:cc:cc}
1.8386 & -4.9411 & -0.4984 & 2.5780 & -0.1485 & 2.6716 & 0.0071 & -1.2094 \\
-2.1966 & 0.9602 & -0.4984 & -0.5028 & 1.0652 & -0.6965 & -0.5092 & 0.3012
\end{array}
\right].
\]
This gain also satisfies \eqref{eq:LMI2} simultaneously, and thus this gain meets the sufficient condition in Theorem~\ref{thm:1}. 

    We carry out the numerical simulations and computations using \textsf{MATLAB R2024b} with \textsf{YALMIP} \cite{YALMIP} and \textsf{MOSEK} \cite{MOSEK} on a MacBook Pro (Apple M3, 16 GB memory).

    Fig.~\ref{fig:state-traj} shows the trajectories of $x_i(t)$. We can observe that $x_i(t)$ approaches a neighborhood of the optimizer $[-0.5,0.5]^{\sf T}$ despite the input nonlinearities. Also from Fig.~\ref{fig:norm-traj}, the error $\|x(t)-\mathds{1}\otimes x^\star\|$ remains within a neighborhood of zero, while it oscillates because of the sinusoidal time-varying input nonlineraties. 
    In this setup, the theoretical sub-optimality bound is given by $\epsilon\approx 2.7\times 10^{4}$, which is conservative, while the actual performance is much better than the theoretical guarantee. This conservativeness arises because the derived solvability condition in Theorem~\ref{thm:1} is a sufficient condition based on the sector conditions, and requires simultaneous satisfaction of the matrix inequalities. Similar drawbacks have also been mentioned in the previous literature on robust (bounded) agreement, such as \cite{Lewis}. Developing less conservative methods remains future work.

\begin{figure}[pos=h]
    \centering
\includegraphics[width=0.6\linewidth]{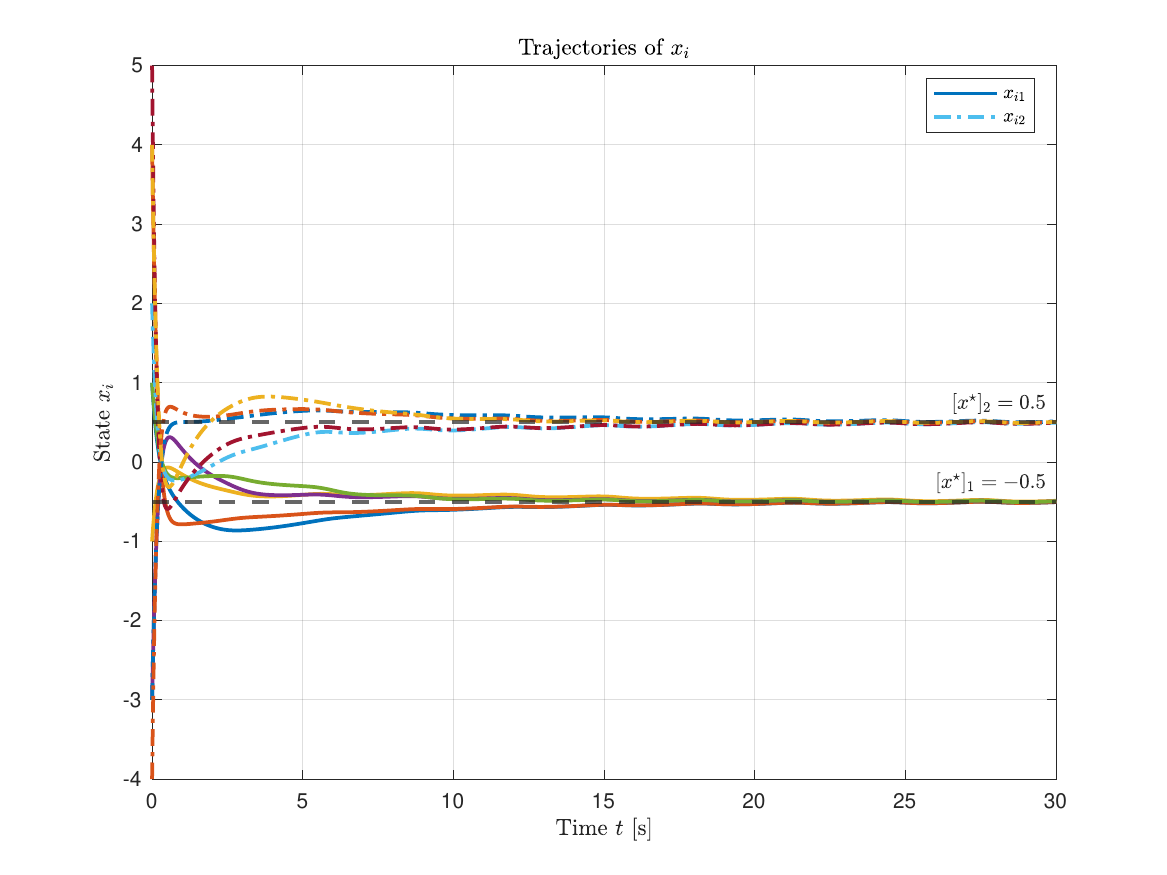}
    \caption{Trajectories of $x_i(t)$}
    \label{fig:state-traj}
\end{figure}

\begin{figure}[pos=h]
  \begin{subfigure}{0.48\textwidth}
    \centering
\includegraphics[width=0.95\linewidth]{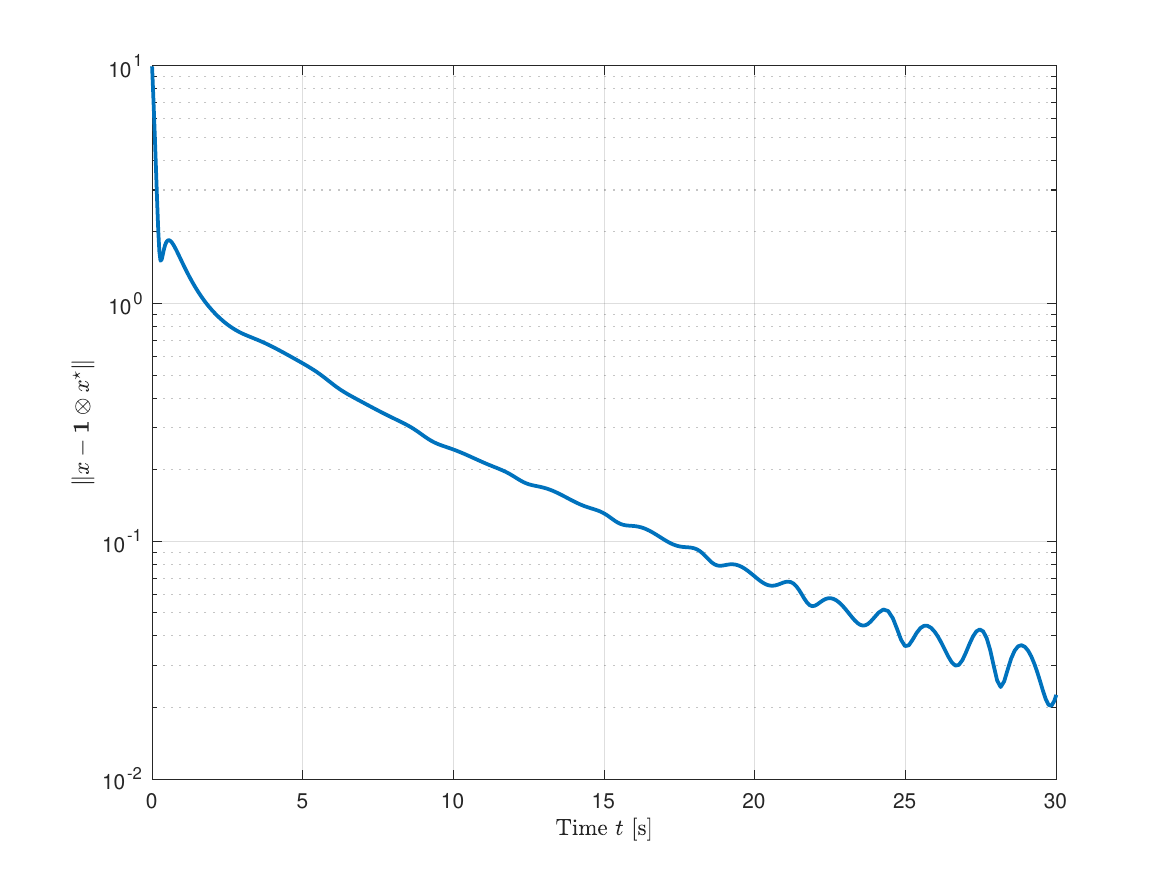}
    \caption{Trajectories of $\|x-\mathds{1}_N\otimes x^\star\|$}
    \label{fig:norm-traj}
\end{subfigure}
\begin{subfigure}{0.48\textwidth}
\includegraphics[width=0.95\linewidth]{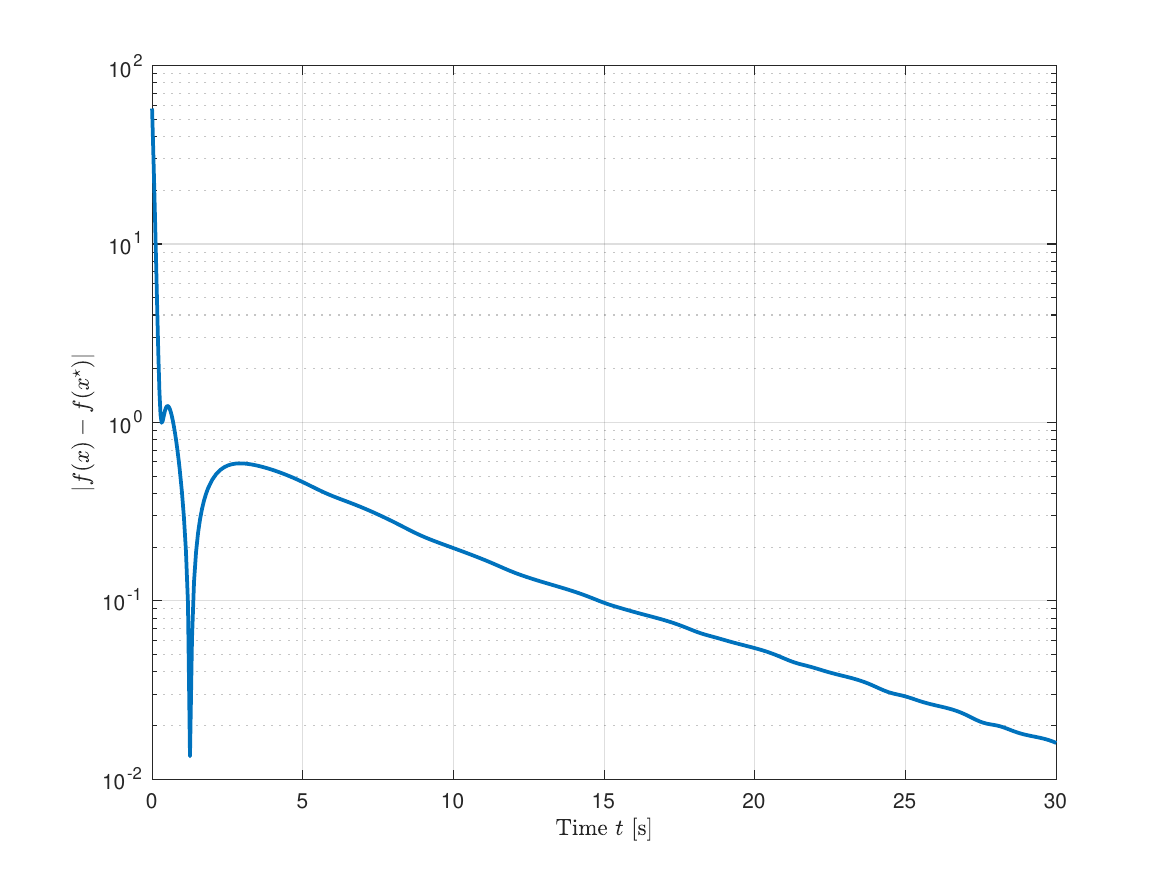}
    \caption{Trajectories of $|f(x)-f(x^\star)|$}
    \label{fig:obj-traj}
    \end{subfigure}
\end{figure}
\section{Concluding Remarks}
In this paper, we have studied robust distributed optimal coordination of linear agents with input nonlinearities. We have captured the time-varying and uncertain input nonlinearities as the sector-bounded uncertainties. By invoking the input-to-state stability analysis in robust control theory, we have developed a robust protocol design method for robust distributed optimal coordination, and have derived a novel solvability condition characterized in terms of the matrix inequalities.

\section*{Acknowledgment}
The author would like to thank Prof.~ Kiyotsugu Takaba (Ritsumeikan Univ.) for his valuable comments in our dairy discussions. Also, the author is sincerely grateful to Dr. Saeed Ahmed (Univ.~of Groningen) for his meaningful comments early on.

\bibliographystyle{model1-num-names}
\bibliography{260402_DOC_Bib}
\end{document}